\pdfoutput=1

\documentclass[conference]{IEEEtran}

\usepackage[hidelinks=true]{hyperref}  
\usepackage{caption}
\usepackage{subcaption}
\captionsetup[algorithm]{format=hang}
\usepackage{graphicx}
\usepackage{xcolor}
\usepackage{booktabs, makecell}
\usepackage{paralist}
\usepackage{fancybox}
\usepackage{background}

% *** MATH PACKAGES ***
%
\usepackage{float}
\usepackage{amsmath, amssymb, amsthm}
\usepackage{mathabx}
\usepackage{wasysym}
\usepackage{mathtools}
\usepackage{algorithmicx}
\usepackage{algorithm}
\usepackage{algpseudocode}
\usepackage{commath}
\usepackage{tikz}
\usetikzlibrary{automata,positioning,arrows}
\usetikzlibrary{decorations.markings}
\usetikzlibrary{calc,fit}
\usepackage[linguistics]{forest}
\forestset{
	colour my roots/.style={
		before typesetting nodes={
			edge=#1,
			for ancestors={
				edge=#1,
				#1,
			},
			#1,
		}
	},
	my edge label/.style={
		edge label={
			node [midway, fill=white, font=\scriptsize] {#1}
		}
	}
}
\usepackage{enumitem}
\usepackage{etex}
\newlist{longenum}{enumerate}{5}
\setlist[longenum,1]{label=\roman*)}
\setlist[longenum,2]{label=\alph*)}
\setlist[longenum,3]{label=\arabic*)}
\setlist[longenum,4]{label=(\roman*)}
\setlist[longenum,5]{label=(\alph*)}

% code by Andrew Stacey (to put frames around pieces of algorithms)
% http://tex.stackexchange.com/a/50054/3954    
\makeatletter
\tikzset{%
	remember picture with id/.style={%
		remember picture,
		overlay,
		save picture id=#1,
	},
	save picture id/.code={%
		\edef\pgf@temp{#1}%
		\immediate\write\pgfutil@auxout{%
			\noexpand\savepointas{\pgf@temp}{\pgfpictureid}}%
	},
	if picture id/.code args={#1#2#3}{%
		\@ifundefined{save@pt@#1}{%
			\pgfkeysalso{#3}%
		}{
			\pgfkeysalso{#2}%
		}
	}
}

\def\savepointas#1#2{%
	\expandafter\gdef\csname save@pt@#1\endcsname{#2}%
}

\def\tmk@labeldef#1,#2\@nil{%
	\def\tmk@label{#1}%
	\def\tmk@def{#2}%
}

\tikzdeclarecoordinatesystem{pic}{%
	\pgfutil@in@,{#1}%
	\ifpgfutil@in@%
	\tmk@labeldef#1\@nil
	\else
	\tmk@labeldef#1,(0pt,0pt)\@nil
	\fi
	\@ifundefined{save@pt@\tmk@label}{%
		\tikz@scan@one@point\pgfutil@firstofone\tmk@def
	}{%
		\pgfsys@getposition{\csname save@pt@\tmk@label\endcsname}\save@orig@pic%
		\pgfsys@getposition{\pgfpictureid}\save@this@pic%
		\pgf@process{\pgfpointorigin\save@this@pic}%
		\pgf@xa=\pgf@x
		\pgf@ya=\pgf@y
		\pgf@process{\pgfpointorigin\save@orig@pic}%
		\advance\pgf@x by -\pgf@xa
		\advance\pgf@y by -\pgf@ya
	}%
}
\newcommand\tikzmark[2][]{%
	\tikz[remember picture with id=#2] #1;}
\makeatother
% end of code by Andrew Stacey

\newcommand\MyBox[4][-1ex]{%
	\tikz[remember picture,overlay,pin distance=0cm]
	{\draw[draw=white,line width=1pt,fill=#4!15,rectangle,rounded corners]
		( $ (pic cs:#2) + (-1ex,2ex) $ ) rectangle ( $ (pic cs:#3) + (1ex,#1) $ );
	}
}

\newcommand*\redlaptop{\includegraphics[height=\heightof{M}]{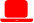}}

\makeatletter
\def\th@remark{%
	\thm@headfont{\bfseries}%
	\normalfont % body font
	\thm@preskip\topsep \divide\thm@preskip\tw@
	\thm@postskip\thm@preskip
}
\g@addto@macro\th@remark{\thm@headpunct{}}
\makeatother
\theoremstyle{remark}

\newtheorem{defn}{Definition} 
\newtheorem{theorem}{Theorem}
\newtheorem{lemma}{Lemma}

\providecommand{\customgenericname}{}
\newcommand{\newcustomtheorem}[2]{%
	\newenvironment{#1}[1]
	{%
		\renewcommand\customgenericname{#2}%
		\renewcommand\theinnercustomgeneric{##1}%
		\innercustomgeneric
	}
	{\endinnercustomgeneric}
}
\newcustomtheorem{customthm}{Theorem}
\newcustomtheorem{customlemma}{Lemma}

\newcommand\stutt{\mathrel{\overset{\makebox[0pt]{\mbox{\normalfont\tiny\sffamily st}}}{\equiv}}}

\newcommand{\ignore}[1]{}

\makeatletter
\renewcommand*{\ALG@name}{Controller Program CP\!}
\makeatother

% Algorithm definitions

\algrenewcommand\textproc{}
\algrenewcomment[1]{\(\!\!\!\textcolor{gray}{//}\) #1}
\captionsetup[algorithm]{labelsep=colon}

% This changes the label of the subsection from A to IV-A

\begin{document}
\SetBgContents{This is an extended version of the same work (with the same name), which appears in FMCAD 2020}
\SetBgPosition{current page.center}
\SetBgAngle{0}                                    % rotate
%\SetBgColor{blue}                                 % color
\SetBgScale{1.2}                                  % scale
\SetBgHshift{0}                                   % location x=0 for center
\SetBgVshift{11.3cm} 

\title{Model Checking Software-Defined Networks with Flow Entries that Time Out}

% author names and affiliations
% use a multiple column layout for up to three different
% affiliations
\author{\IEEEauthorblockN{Vasileios Klimis, George Parisis and Bernhard Reus}
\IEEEauthorblockA{University of Sussex, UK\\
\{v.klimis, g.parisis, bernhard\}@sussex.ac.uk}
}

\maketitle

\begin{abstract}
Software-defined networking (SDN) enables advanced operation and management of network deployments through (virtually) centralised, programmable controllers, which deploy network functionality by installing rules in the flow tables of network switches. Although this is a powerful abstraction, buggy controller functionality could lead to severe service disruption and security loopholes, motivating the need for \mbox{(semi-)}automated tools to find, or even verify absence of, bugs. Model checking SDNs has been proposed in the literature, but none of the existing approaches can support dynamic network deployments, where flow entries expire due to timeouts. This is necessary for automatically refreshing (and eliminating stale) state in the network (termed as \emph{soft-state} in the network protocol design nomenclature), which is important for scaling up applications or recovering from failures. In this paper, we extend our model (MoCS) to deal with timeouts of flow table entries, thus supporting soft state in the network. Optimisations are proposed that are tailored to this extension. We evaluate the performance of the proposed model in {\scshape Uppaal} using a load balancer and firewall in network topologies of varying size.
\end{abstract}

\IEEEpeerreviewmaketitle

\section{Introduction}
\label{intro}

Software-defined networking (SDN) \cite{TheRoadToSDN} revolutionised network operation and management along with future protocol design; a virtually centralised and programmable controller `programs' network switches through interactions (standardised in OpenFlow \cite{Openflow}) that alter switches' flow tables. In turn, switches push packets to the controller when they do not store state relevant to forwarding these packets. Such a paradigm departure from traditional networks enables the rapid development of advanced and diverse network functionality; e.g., in designing next-generation inter-data centre traffic engineering \cite{DevoFlow}, load balancing \cite{Plug-n-Serve}, firewalls \cite{SDNFirewall} and Internet exchange points (IXPs) \cite{SDX}. Although this is a powerful abstraction, buggy controller functionality could lead to severe service disruption and security loopholes. This has led to a significant amount of research on SDN verification and/or bug finding, including static network analysis \cite{Anteater,HSA,VeriCon2}, dynamic real-time bug finding \cite{McClurg,Plotkin,Deltanet,Netplumber}, and formal verification approaches, including symbolic execution \cite{SymNet,NICE,NetSMC} and model checking \cite{mocs,McClurg, NetSMC,Kuai} methods. A comprehensive review of existing approaches along with their shortcomings can be found in \cite{survey_NV}.

Model checking is a renowned automated technique for hardware and software verification and existing model checking approaches for SDNs have shown promising results with respect to scalability and model expressivity, in terms of supporting realistic network deployments and the OpenFlow standard. However, a key limitation of all existing approaches is that they cannot model forwarding state (added in network switches’ flow tables by the controller) that expires and gets deleted. Without this, one cannot model nor verify the correctness of SDNs with soft-state which is prominent in the design of protocols and systems that are resilient to failures and scalable; e.g., as in \cite{hedera}, where flow scheduling is on a per-flow basis, and numerous network protocols where in-network state is not explicitly removed but expires, so that overhead is minimised \cite{HardSoft}.

In this paper, we extend our model (MoCS) \cite{mocs} to support soft-state, complying with the OpenFlow specification, by allowing flow entries to time out and be deleted. We propose relevant optimisations (as in \cite{mocs}) in order to improve verification performance and scalability. We evaluate the performance of the proposed model extensions in {\scshape Uppaal} using a load balancer and firewall in network topologies of varying size.

\section{MoCS SDN Model}
\label{model}

The MoCS model \cite{mocs} is formally defined by means of an action-deterministic transition system. We parameterise the model by the underlying network topology, $\lambda$, and the controller program, {\sc cp}, in use. The model is a 6-tuple $\mathcal{M}_{(\lambda,\textsc{cp})} = (S, s_0, A, \hookrightarrow, \mathit{AP}, L)$, where $S$ is the set of all states the SDN may enter, $s_0$ the initial state, $A$ the set of actions which encode the events the network may engage in, $\hookrightarrow \subseteq S \times A \times S$ the transition relation describing which execution steps the system undergoes as it perform actions, \emph{AP} a set of atomic propositions describing relevant state properties, and $L: S \to 2^{\mathit{AP}}$ is a labelling function, which relates to any state $s\in S$ a set $L(s) \in 2^{\mathit{AP}}$ of those atomic propositions that are true for $s$. Such an SDN model is composed of several smaller systems, which model network components (hosts\footnote{A host can act as a client and/or server.}, switches and the controller) that communicate via queues and, combined, give rise to the definition of $\hookrightarrow$. 
%The states of an SDN transition system are triples $(\pi, \delta, \gamma)$, where $\pi$ represents the state of each host, $\delta$ the state of each switch, and $\gamma$ the controller state. 
A detailed description of MoCS' components and transitions can be found in \cite{mocs}. Due to lack of space, in this paper, we only discuss aspects of the model that are required to understand and verify the soundness of the proposed model extensions, and examples used in the evaluation section. Figure \ref{fig:model} illustrates a high-level view of OpenFlow interactions, modelled actions and queues, including the proposed extensions discussed in Section~\ref{sec:extensions}. 

\begin{figure}[t]
	\centering
	%	\captionsetup{width=1\textwidth}
	\includegraphics[width=1\linewidth]{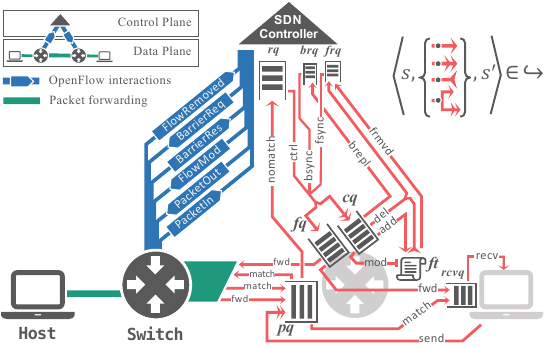}
	\caption{A high-level view of OpenFlow interactions (left half) and modelled actions (right half). A red solid-line arrow depicts an action which, when fired, (1) dequeues an item from the queue the arrow begins at, and (2) (possibly) adds an item in the queue the arrowhead points to (or multiple items if the arrow is double-headed). Deleting an item from the target queue is denoted by a reverse arrowhead; modifying in, by a hammerhead. A forked arrow denotes (possibly) multiple targeted queues.}
	\label{fig:model}
		\vspace{-.3cm}
\end{figure}

%(Already done above)
%The operational semantics of an SDN is an action-labelled transition system, parameterised by the underlying \emph{network topology}, $\lambda$, and the \emph{controller program}, {\sc cp}, in use: $\mathcal{M}_{(\lambda,\textsc{cp})} = (S, s_0, A, \hookrightarrow, AP, L)$, where $S$ is a finite set of states with initial state $s_0 \in S$, $A$ the set of actions, $\hookrightarrow \subseteq S \times A \times S$ the transition relation, and $L$ is a labelling function from states $S$ to sets of atomic propositions $2^{AP}$ assigning the atomic propositions that hold to each state.
%

\noindent\textbf{States and queues:} A \emph{state} is a triple $(\pi,\delta, \gamma)$, where $\pi$ is a family of hosts, each consisting of a receive queue (\emph{rcvq}); $\delta$ is a family of switches, consisting of a switch packet queue (\emph{pq}), switch forward queue (\emph{fq}), switch control queue (\emph{cq}), switch flow table (\emph{ft}); $\gamma$ consists of the local controller program state $\mathit{cs \in CS}$, and a family of controller queues: request queue (\emph{rq}), barrier-reply queue (\emph{brq}) and flow-removed queue (\emph{frq}). So $\pi$ and $\delta$ describe   the \emph{data-plane}, and $\gamma$ the \emph{control plane}. The network components communicate via the shared queues. Each transition models a certain network event that will involve some of the queues, and maybe some other network state. Concurrency is modelled through interleavings of those events.

\noindent\textbf{Transitions:} Each transition is labelled with an action $\alpha \in A$ that indicates the nature of the network event. We write $s \xhookrightarrow[]{  \alpha  } s'$ and $(s, \alpha, s') \in \hookrightarrow$ interchangeably to denote that the network moved from state $s$ to $s'$ by executing transition $\alpha$. The parts of the network involved in each individual $\alpha$, i.e.\ \emph{packets, rules, barriers, switches, hosts, ports} and \emph{controller states}, are included in the transition label as parameters; e.g., $\mathit{match}(sw,pkt,r)\in A$ denotes the action that switch $sw$ matches packet $pkt$ by rule $r$ and, as a result, forwards it accordingly, leading to a new state after transition.

%already done or not important
%The transitions are guarded by predicates (enabling conditions) over the buffers, controller states and/or arguments that form the parameters of the action to be taken: when the guards are satisfied the specific instance of the action in a state is enabled. If the action  $\alpha(\vec{x})$ is taken  in a state $s$ that enabled it, then $(s,\alpha, s') \in \hookrightarrow$ where $s'$ is the resulting state with its components updated according to the action. Details of the MOCS transition system can be found in  \cite{MOCS}.
%
\noindent\textbf{Atomic propositions:} The propositions in \emph{AP} are statements on 
\begin{inparaenum}[(1)]
	\item controller program states, denoted by $Q(q)$ which expresses that the controller program is in state $\mathit{q \in CS}$, allowing one to reason about the controller's internal data structures, and  
	\item packet header fields -- those packets may be in any switch buffer  \emph{pq} or host buffer \emph{rcvq} (but no other buffers). For instance, $\exists \mathit{pkt} {\in} \mathit{sw.pq}\,.\, P(\mathit{pkt})$ is a legitimate atomic proposition that states that there is a packet in $\mathit{sw}$'s packet queue that satisfies packet \emph{pkt} property $P$. 
\end{inparaenum}

\noindent\textbf{Topology: }$\lambda$ describes the network topology as a bijective map which associates one network interface (a pair of networking device and physical port) to another.

\noindent\textbf{Specification Logic:} The properties of the SDNs to be checked in this paper are safety properties, expressed in linear-time temporal logic without `next-step' operator, $\text{LTL}_{ \setminus \{\bigcirc \}}$.
We have enriched the logic by modal operators of \emph{dynamic logic} \cite{DL1}, allowing formula construct of the form $[\alpha(\vec{x})]P$ stating that whenever an event $\alpha(\vec{x})$ happened, $P$ must hold. Note that $P$ may contain variables from $x$. This extension is syntax sugar in the sense that the formulae may be expressed by additional state%\footnote{And this is how it is actually implemented in the model checker}
; e.g., $\big[ \mathit{match(sw,pkt,r)}  \big](r.\mathit{fwdPort}=\texttt{\small drop})$ states that if \emph{match} happened, it was via a rule that dropped the packet.
This permits specification formulae to be interpreted not only over states, but also over actions that have happened.  

%
%\emph{Dropping} a packet, e.g., is such an aspect embedded in a \emph{ctrl, match} or \emph{fwd}: if a \emph{match}-action drops a packet, it might be the case the transition self-loops, due to the (0,$\infty$) abstraction, and as such, no footprints of dropping-event are left on the state.
%More concretely, the property: ``\emph{a packet destined to server with IP 1.1.1.1 should never be dropped}", can be expressed in dynamic logic by the dual modal construct: $\big[ match(sw,pkt,r)  \big]\big((r.fwdPort=drop) \implies \neg (pkt.IPdest=1.1.1.1)\big)$. 
%In English, the DL formula reads as follows: whenever a packet $pkt$ with destination IP address 1.1.1.1 matches rule $r$ (in the flow table of switch $sw$) with an empty set of actions (i.e., does not forward packets anywhere), the property will be refuted.
%DL formulas can also be used to describe that a condition $\varphi$ is \emph{possibly} satisfied after executing $s \xhookrightarrow[]{  \alpha } s'$, denoted $\langle s \xhookrightarrow[]{  \alpha } s' \rangle \varphi$.

\noindent The model checking problem then, for an SDN model $\mathcal{M}_{(\lambda,\textsc{cp})}$ with a given topology $\lambda$, a control program $\textsc{cp}$ and a formula $\varphi$ of the specification logic as described above, boils down to checking whether all runs of $\mathcal{M}_{(\lambda,\textsc{cp})}$ satisfy  $\varphi$, short  $\mathcal{M}_{(\lambda,\textsc{cp})} \models \Box\varphi$.
%In a partial-order view, a concurrent system can perform causally independent atomic actions simultaneously.

%
%Hence, given an LTL/CTL\textsuperscript{*}\footnote{Each LTL formula $\varphi$ can be embedded into CTL\textsuperscript{*}: $\mathcal{M} \models \varphi \Leftrightarrow \mathcal{M} \models \forall \varphi$} formula $\varphi$ over $AP$, $s \models \varphi$ iff $L(s) \models \varphi$. 

\noindent\textbf{SDN Operation:} End-hosts send and receive packets (\textit{send} and \textit{recv} actions in Figure \ref{fig:model}) and switches process incoming packets by matching them (or failing to) with a flow table entry (rule). In the former case (\textit{match} action), the packet is forwarded as prescribed by the rule. In the opposite case (\textit{nomatch} action), the packet is sent to the controller (\textit{PacketIn} message on the left side of Figure \ref{fig:model}). The controller’s packet handler is executed in response to incoming \textit{PacketIn} messages; as a result of its execution, its local state may change, a number of packets (\textit{PacketOut} message) and rule updates (\textit{FlowMod} message), interleaved with barriers (\textit{BarrierReq} message), may be sent to network switches. Network switches react to incoming controller messages; they forward packets sent by the controller as specified in the respective \textit{PacketOut} message (\textit{fwd} action), update their own forwarding tables (\textit{add/del} actions), respecting set barriers and notifying the controller (\textit{BarrierRes} message) when said barriers are executed (\textit{brepl} action). Finally, upon receiving a \textit{BarrierRes} message, the controller executes the respective handler (\textit{bsync} action), which can result in the same effects as the \textit{PacketIn} message handler. 

\noindent\textbf{Abstractions:} To obtain finitely representable states, all queues in the model must be finitely representable. For packet queues we use multisets, subject to $(0,\infty)$ abstraction \cite{01infty}; a packet either does not appear in the queue or appears an unbounded number of times. The other queues are simply modelled as finite sets. Modelling queues as sets means that entries are not processed in the order of arrival. This is intentional for packet queues but for controller queues this may limit behaviour unless the controller program is order-insensitive. We focus on those controller programs in this paper.

\section{Modelling Flow Entry Timeouts}
\label{sec:extensions}

In order to model soft-state in the network, we enrich our model with two new actions that model flow entry timeouts and subsequent handling of these timeouts by the controller program. Note that in our model, timeouts are not triggered by any kind of clock; instead, they are modelled through the interleaving of actions in the underlying transition system that ensure that flow removal (and subsequent handling by the controller program) will appear as it would for any possible value of a timeout in a real system. 

The new actions are defined as follows: $\mathit{frmvd(sw,r)}$ models the timeout event, as an action in the transition system that removes the flow entry (rule) $r$ from switch $\mathit{sw}$ and notifies the controller by placing a \textit{FlowRemoved} message (see Figure \ref{fig:model}) in the respective queue (\textit{frq}). The $\mathit{fsync(sw,r,cs)}$ action models the call to the \textit{FlowRemoved} message handler. As a result of the handler execution, the controller's local state (\textit{cs}) may change, a number of packets (\textit{PacketOut} messages) and rule updates (\textit{FlowMod} messages), interleaved with barriers (\textit{BarrierReq} message), may be sent to network switches. 
In order to model timeouts, rules are augmented with a \emph{timeout} bit which, when true, signals that the installed rule can be removed at any time, i.e., the \emph{frmvd}-action can be interleaved, in any order, with any other action that is enabled at any state later than the installation of this rule. 

To support our examples, we add to the set of \textit{FlowMod} messages a \emph{modify flow entry} instruction. In \cite{mocs} we only used $\mathit{add(sw,r)}$ and $\mathit{del(sw,r)}$ messages, for installing and deleting rule $r$ at switch $\mathit{sw}$, respectively.
We now add $\mathit{mod(sw,f,a)}$ to these messages. This instructs switch $\mathit{sw}$ that if a rule is found in $\mathit{sw.ft}$ that matches field \emph{f}, its forwarding actions are modified by \emph{a}. If no such rule exists, $\mathit{mod}(\cdot)$ does not do anything.

\noindent\textbf{Optimisation:}
To tackle the state-space explosion, we exploit the fact that some traces are observationally (w.r.t.\ the property to be proved) equivalent, so that only one of those needs to be checked. This technique, referred to as \emph{partial-order reduction} (POR) \cite{representatives}, reduces the number of interleavings (traces) one has to check. To prove equivalence of traces, one needs actions to be permutable and invisible to the property at hand. This is the motivation for the following definition:

\begin{defn}[{\sc Safe Actions}] \label{def:safe} Given a context {\sc ctx} $= (\textsc{cp},\lambda,\varphi)$, and SDN model $\mathcal{M}_{(\lambda,\textsc{cp})} = (S, A, \hookrightarrow, s_0, \mathit{AP}, L)$, an action $\alpha(\cdot)\in A(s)$ is called \emph{safe} if it is 
	\begin{inparaenum}[(1)]
		\item \emph{independent} of any other action $\beta$ in $A$, i.e. executing $\alpha$ after $\beta$ leads to the same state as running $\beta$ after $\alpha$, and 
		\item \emph{unobservable} for $\varphi$ (also called $\varphi$-\emph{invariant}), i.e., $s\models \varphi$ iff $\alpha(s)\models \varphi$ for all $s\in S$ with $\alpha\in A(s)$.
	\end{inparaenum}
	% its execution does not change the interpretation of any proposition in $AP \cap \mathit{Atoms}(\varphi)$ 
	%which assert about values of header fields of a packet $pk$ located in $sw.pOb$ or in $b.pq$ 
	%for any $b\in \{ \mathit{Switches}, \mathit{Clients} \}$ where $sw$ and $b$ appear in $\varphi$. 
%	We write safe actions $\check{\alpha}(\cdot)$.
\end{defn}
 The following property of controller programs is needed to show safety:
 
\begin{defn}[{\sc Order-sensitive Controller Program}]
	\label{def:ord-sens}
	A controller program {\sc cp} is order-sensitive if there exists a state $s\in S$ and two actions $\alpha,\beta$ in $\{ \mathit{ctrl}(\cdot), \mathit{bsync}(\cdot), \mathit{fsync}(\cdot)   \}$ such that $\alpha,\beta \in A(s)$
	and $s\xhookrightarrow[]{\alpha} s_1 \xhookrightarrow[]{\beta} s_2$ and $s\xhookrightarrow[]{\beta} s_3 \xhookrightarrow[]{\alpha} s_4$ with $s_2\neq s_4$.
\end{defn}

In \cite{mocs} we already showed that certain actions are safe and can be used for PORs. We now show that the new $\mathit{fsync}(\cdot)$ action is safe on certain conditions.

\begin{lemma}[{\sc Safeness Predicates for} \emph{fsync}]\label{lemma:safe} For transition system $\mathcal{M}_{(\lambda,\textsc{cp})} = (S, A, \hookrightarrow, s_0, \mathit{AP}, L)$ and a formula $\varphi\in \text{LTL}_{ \setminus \{\bigcirc \}}$, 
	$\alpha = \mathit{fsync(sw, r, cs)}$ is safe iff the following two conditions are satisfied:

\begin{quote}
\begin{description}[leftmargin=!,labelwidth=\widthof{\bfseries Independence}]
	\item[Independence] {\sc cp} is not order-sensitive 
	\item[Invisibility] {\upshape\texttt{if} }$Q(q)$ in $\mathit{AP}$ occurs in $\varphi$, {\upshape\texttt{then}} $\alpha$ is $\varphi$-invariant
\end{description}	
\end{quote}	
%	where $\mathit{Safe}_i$ are given in Table \ref{tab:safeness}. 
\end{lemma}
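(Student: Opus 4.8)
The plan is to prove the two biconditionals separately, matching the two clauses of Definition~\ref{def:safe}: pair the \textbf{Independence} condition with clause~(1) (independence of \emph{fsync} from every other action) and the \textbf{Invisibility} condition with clause~(2) (unobservability of \emph{fsync} for $\varphi$). Since \emph{safe} is by definition the conjunction of (1) and (2), establishing $(1)\Leftrightarrow\text{Independence}$ and $(2)\Leftrightarrow\text{Invisibility}$ yields the stated equivalence. The engine of both arguments is a \emph{footprint} analysis of \emph{fsync}, which I would isolate first as a structural observation: a single $\mathit{fsync(sw,r,cs)}$ step dequeues a \textit{FlowRemoved} message from \emph{frq}, possibly updates the controller state $\mathit{cs}$, and enqueues \textit{PacketOut}/\textit{FlowMod}/\textit{BarrierReq} items into switch control queues \emph{cq}; crucially it never reads or writes any \emph{pq} or \emph{rcvq}.

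For $\text{Invisibility}\Leftrightarrow(2)$, one direction is immediate: if \emph{fsync} is $\varphi$-invariant at all $s$, the conditional ``if $Q(q)$ occurs in $\varphi$ then \emph{fsync} is $\varphi$-invariant'' holds because its consequent is simply true. For the converse I would split on whether a $Q(\cdot)$ atom occurs in $\varphi$. If it does, the Invisibility condition delivers $\varphi$-invariance directly. If it does not, every atom of $\varphi$ is a packet property over some \emph{pq} or \emph{rcvq}; by the footprint observation \emph{fsync} leaves all such queues untouched, so it preserves the labelling $L$ on exactly the atoms that $\varphi$ can see, and invisibility for the stutter-insensitive logic $\text{LTL}_{\setminus\{\bigcirc\}}$ follows. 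Either way \emph{fsync} is unobservable.

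For $\text{Independence}\Leftrightarrow(1)$, the sufficiency direction is a case analysis on the type of the co-action $\beta$. When $\beta\notin\{\mathit{ctrl}(\cdot),\mathit{bsync}(\cdot),\mathit{fsync}(\cdot)\}$, $\beta$ does not touch $\mathit{cs}$, so commutation reduces to checking that the queue effects commute: \emph{fsync}'s additions into \emph{cq} commute with any consumer of \emph{cq} (\emph{add}, \emph{del}, \emph{fwd}, \emph{brepl}), provided the produced and consumed items are distinct, which holds because $\beta$ acts on an item already present in $s$ while \emph{fsync} injects fresh items. The one genuinely delicate case is $\beta=\mathit{frmvd}(\cdot)$, which writes \emph{frq}: here I would argue that since $\mathit{fsync(sw,r,cs)}$ consumes a fixed \textit{FlowRemoved} entry that must already be present for it to be enabled, \emph{frmvd} adding another entry neither disables \emph{fsync} nor is disabled by it, so the two commute. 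When $\beta\in\{\mathit{ctrl},\mathit{bsync},\mathit{fsync}\}$, both actions read and write the shared $\mathit{cs}$, and their commutation as full state transformations is precisely order-insensitivity of that pair, which the hypothesis (\textsc{cp} not order-sensitive) supplies.

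The necessity direction is where I expect the real difficulty. Unobservability hands us the Invisibility condition cheaply, but deriving ``\textsc{cp} not order-sensitive'' from ``\emph{fsync} is independent of every $\beta$'' is not a formal tautology: independence of \emph{fsync} only constrains pairs one of whose members is an \emph{fsync} step, whereas order-sensitivity may a priori be witnessed by a \emph{ctrl}/\emph{bsync} pair. I would attempt this by contraposition together with the model's structure, using that all three handlers mutate $\mathit{cs}$ only through the common controller-program dispatch, and try to show that any order-sensitivity witness can be mirrored by a transition pair involving an \emph{fsync} step, contradicting independence. Pinning down this ``promotion'' of an arbitrary witness into one involving \emph{fsync} --- and confirming it is not vacuous when the \textit{FlowRemoved} handler happens to leave $\mathit{cs}$ unchanged --- is the main obstacle, and is where I would concentrate the effort (possibly isolating an additional modelling assumption, or reading the Independence condition as the tightest check over precisely the handler pairs that \emph{fsync} can participate in).
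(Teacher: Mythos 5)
Your proposal follows essentially the same route as the paper's proof for the sufficiency direction: a footprint analysis of $\mathit{fsync}$ (it touches only \emph{frq}, the switch control/forward queues and $\mathit{cs}$, never a \emph{pq} or \emph{rcvq}), a case split on the co-action $\beta$ with order-insensitivity of {\sc cp} discharging the $\mathit{ctrl}/\mathit{bsync}/\mathit{fsync}$ cases, and the observation that atomic propositions cannot mention the controller-side queues so only a change of $\mathit{cs}$ could make $\mathit{fsync}$ visible --- which is exactly what the Invisibility condition guards. Your worry about the necessity direction is well placed, but you should know the paper does not resolve it either: despite the ``iff'' in the statement, the published proof only argues that the two conditions imply safeness. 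Your observation that an order-sensitivity witness consisting of a $\mathit{ctrl}/\mathit{bsync}$ pair is not directly contradicted by independence of $\mathit{fsync}$ alone is a genuine criticism of the lemma as stated, not a defect of your attempt; the honest options are to weaken the statement to an implication, or to add the kind of ``promotion'' argument or extra modelling assumption you sketch. One small correction: $\mathit{fsync}$ also places \textit{PacketOut} items in the switch forward queues \emph{fq}, not only in \emph{cq}, though this does not affect your argument since \emph{fq} is equally invisible to the specification logic.
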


\begin{proof} 
%	\vspace{-0.2cm}	
	See Appendix \ref{appendix:proofs}.
\end{proof}

Given a context {\sc ctx} $= (\textsc{cp},\lambda,\varphi)$ and an SDN network model $\mathcal{M}_{(\lambda,\textsc{cp})} = (S, A, \hookrightarrow, s_0, \mathit{AP}, L)$, for each state $s \in S$ define $\mathit{ample}(s)$ as follows: $\text{if } \{\alpha\in A(s)\ |\ \alpha \text{ safe } \} \neq \emptyset$, then $\mathit{ample}(s)= \{\alpha\in A(s)\ |\ \alpha \text{ safe } \}$; otherwise $\mathit{ample}(s)= A(s)$.
Next, we define $\mathcal{M}_{(\lambda,\textsc{cp})}^{\mathit{fr}} = (S^{\mathit{fr}}, A, \hookrightarrow_{\mathit{fr}}, s_0, \mathit{AP}, L^{\mathit{fr}})$, where
$S^{\mathit{fr}} \subseteq S$ the set of states reachable from the initial state $s_0$ under $\hookrightarrow_{\mathit{fr}}$, $L^{\mathit{fr}}(s) = L(s)$ for all $s \in S^{\mathit{fr}}$ and  $\hookrightarrow_{\mathit{fr}}\,    \subseteq S^{\mathit{fr}}\times A\times S^{\mathit{fr}}$ is defined inductively by the rule:
\[
\frac {s~ \xhookrightarrow[]{ \alpha}    ~s'}{ s~ \xhookrightarrow[]{ \alpha}_{\mathit{fr}}    s'  }
\qquad   \textup{if }\  \alpha \in \mathit{ample}(s)  \]

Now we can proceed to extend the POR Theorem of \cite{mocs}:

\begin{theorem}[{\sc Flow-Removed Equivalence}]
	Given a property $\varphi \in \text{LTL}_{ \setminus \{\bigcirc \}}$, it holds that $\mathcal{M}_{(\lambda,\textsc{cp})}^{\mathit{fr}}$ satisfies $\varphi$ iff $\mathcal{M}_{(\lambda,\textsc{cp})}$ satisfies $\varphi$. 
	\label{thm:frmvd}
\end{theorem}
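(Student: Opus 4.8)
The plan is to invoke the standard correctness theorem for ample-set partial-order reduction \cite{representatives}: once the map $\mathit{ample}(\cdot)$ defined above is shown to satisfy the four ample-set conditions — (C0) nonemptiness, (C1) the dependency condition, (C2) invisibility, and (C3) the cycle condition — that theorem guarantees that $\mathcal{M}^{\mathit{fr}}_{(\lambda,\textsc{cp})}$ and $\mathcal{M}_{(\lambda,\textsc{cp})}$ are stutter-trace equivalent. Since every $\varphi\in\text{LTL}_{\setminus\{\bigcirc\}}$ is invariant under stuttering, stutter-trace equivalence transfers satisfaction in both directions, which is exactly the claimed biconditional. One direction is in fact immediate: because $\hookrightarrow_{\mathit{fr}}\,\subseteq\,\hookrightarrow$, every run of $\mathcal{M}^{\mathit{fr}}_{(\lambda,\textsc{cp})}$ is also a run of $\mathcal{M}_{(\lambda,\textsc{cp})}$, so $\mathcal{M}_{(\lambda,\textsc{cp})}\models\varphi$ at once entails $\mathcal{M}^{\mathit{fr}}_{(\lambda,\textsc{cp})}\models\varphi$. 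The substance of the proof lies in the converse, for which one takes an arbitrary run of the full model and, by commuting safe actions, exhibits a stutter-equivalent run of the reduced model; this is precisely what the four conditions secure.

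Conditions (C0)--(C2) follow directly from Definition~\ref{def:safe} together with the definition of $\mathit{ample}(\cdot)$, and I would dispatch them first. Nonemptiness is built into the construction, since $\mathit{ample}(s)=A(s)$ whenever no safe action is enabled and $\mathit{ample}(s)\subseteq A(s)$ is otherwise a nonempty set of safe actions. For (C1), observe that whenever $\mathit{ample}(s)\neq A(s)$ every element of $\mathit{ample}(s)$ is safe and hence, by clause~(1) of Definition~\ref{def:safe}, independent of every action in $A$; since no action can then depend on an ample action, the dependency condition holds vacuously. For (C2), whenever $\mathit{ample}(s)\neq A(s)$ every ample action is safe and therefore $\varphi$-invariant by clause~(2), so the labelling that $\varphi$ observes is unchanged along ample transitions. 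Lemma~\ref{lemma:safe} is exactly what is needed to count the new $\mathit{fsync}(\cdot)$ action among the safe actions, so these arguments apply uniformly to the extended model, and we inherit the corresponding facts for the previously established safe actions from \cite{mocs}.

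The hard part is the cycle condition (C3), which forbids an enabled action from being deferred forever along a cycle of the reduced system; I would discharge it by proving that every cycle of $\mathcal{M}^{\mathit{fr}}_{(\lambda,\textsc{cp})}$ contains at least one fully expanded state, i.e.\ a state $s$ with $\mathit{ample}(s)=A(s)$. For the safe actions already treated in \cite{mocs} this obligation is inherited, so the genuinely new work is to check that adding $\mathit{fsync}(\cdot)$ transitions cannot create a cycle all of whose states offer a safe action and are therefore never fully expanded. The approach is a well-foundedness argument: such a cycle would consist solely of safe transitions, so it suffices to exhibit a measure on states that every safe action strictly decreases; traversing the cycle would then force a strict decrease while returning to the same state, a contradiction. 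Identifying a measure that works uniformly for $\mathit{fsync}(\cdot)$ — whose handler may itself enqueue \textit{PacketOut}, \textit{FlowMod} and \textit{BarrierReq} messages even as it consumes a \textit{FlowRemoved} message from \emph{frq} — is the delicate, model-specific step, and the one I expect to require the most care, since it is the only condition that depends on the concrete queueing structure rather than on the abstract notion of safeness.

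With (C0)--(C3) established, the ample-set correctness theorem \cite{representatives} yields stutter-trace equivalence of $\mathcal{M}^{\mathit{fr}}_{(\lambda,\textsc{cp})}$ and $\mathcal{M}_{(\lambda,\textsc{cp})}$. Combining this with the stutter invariance of $\text{LTL}_{\setminus\{\bigcirc\}}$ gives the converse direction flagged above — every run of the full model has a stutter-equivalent run in the reduced model, which satisfies $\varphi$ by hypothesis, so the original run does too — and together with the immediate forward direction this establishes $\mathcal{M}^{\mathit{fr}}_{(\lambda,\textsc{cp})}\models\varphi$ iff $\mathcal{M}_{(\lambda,\textsc{cp})}\models\varphi$.
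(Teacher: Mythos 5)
Your overall scaffolding is exactly the paper's: verify the four ample-set conditions, invoke the correctness theorem of \cite{representatives} to get stutter-trace equivalence, and conclude via stutter invariance of $\text{LTL}_{\setminus\{\bigcirc\}}$. Your treatment of nonemptiness, dependency and invisibility matches the paper's (they follow essentially by definition of $\mathit{ample}$ and of safe actions, with Lemma~\ref{lemma:safe} admitting $\mathit{fsync}$ into the safe set). The gap is that you stop short of discharging the cycle condition, which you correctly identify as the only substantive new obligation: you announce that a well-founded measure is needed and that finding one is ``the delicate, model-specific step,'' but you do not produce it. That step is the actual content of the paper's proof, so as it stands the argument is incomplete precisely where it matters.

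The missing measure is the contents of the flow-removed queue $\mathit{frq}$. Since $\mathit{frq}$ is modelled as a set, each transition $s \xhookrightarrow{\mathit{fsync}(\mathit{sw},r,\mathit{cs})} s'$ satisfies $\gamma'.\mathit{frq} = \gamma.\mathit{frq}\setminus\{r\}$, i.e.\ $\mathit{fsync}$ strictly shrinks $\mathit{frq}$, and --- crucially --- no other \emph{safe} action accesses $\mathit{frq}$ at all (the action that grows it, $\mathit{frmvd}$, is not safe and so cannot occur in a purely-safe cycle). Hence any cycle of safe actions containing at least one $\mathit{fsync}$ ends in a state with strictly smaller $\mathit{frq}$ than it started with, a contradiction; a cycle of safe actions containing no $\mathit{fsync}$ is ruled out by Theorem~2 of \cite{mocs}. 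Note also that your formulation asks for a measure that \emph{every} safe action strictly decreases; that is stronger than needed and in fact unattainable here (e.g.\ $\mathit{ctrl}$ and $\mathit{bsync}$ do not touch $\mathit{frq}$). The case split above --- strict decrease under $\mathit{fsync}$, non-increase under all other safe actions, plus the inherited result for $\mathit{fsync}$-free cycles --- is what actually closes the argument. Your worry about the $\mathit{fsync}$ handler enqueueing \textit{PacketOut}, \textit{FlowMod} and \textit{BarrierReq} messages is harmless for this measure, since those go into queues other than $\mathit{frq}$.
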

The proof is a consequence of Lemma~\ref{lemma:safe} applied to the proof of Theorem~2 in \cite{mocs}. See Appendix \ref{appendix:proofs} for a detailed proof.

%\begin{proof} 
%	%	\vspace{-0.2cm}	
%	Informally we argue as follows.
%	If an action $\alpha$ is independent of any other action in $A$, then one can repeatedly permute it with the adjacent actions across any sequence obtaining paths with equivalent state valuations. 
%	Furthermore, if $\alpha$ is silent then this permutation does not affect the labelling of the states with atomic propositions and as such the property is invariant under such permutation. So, only one representative out of all equivalent traces is considered in $\mathcal{M}^{fr} $ for model checking.\\
%	See Appendix \ref{appendix:proofs} for a full proof.
%\end{proof}

\ignore{
\section{Liveness under Fairness}
Liveness properties constrain infinite behaviours. Without fairness assumptions, verification of liveness often produces unrealistic infinite counterexamples (non-progress cycles), preventing thus the unfolding of the transition system to make progress.

As an example, consider the load balancer (LB) in Figure \ref{fig:lb} which distributes client's requests to a cluster of servers. Pseudo code for this controller program can be found in CP\ref{alg:rr}. 

\begin{figure}[h] 
	\centering
	%	\captionsetup{width=1\textwidth}
	\includegraphics[width=.9\linewidth]{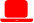}  
	\caption{A client and two servers connecting to an OF-switch}
	\label{fig:lb}
	%	\vspace{-.3cm}
\end{figure}

 \begin{algorithm}[H]
	\caption{Round-Robin Load Balancing}
	\algrenewcommand\algorithmicprocedure{\textbf{handler}}
	\begin{algorithmic}[1]
		\renewcommand{\algorithmicrequire}{\textbf{Input:}}
		\renewcommand{\algorithmicensure}{\textbf{Output:}}
		\Procedure{\texttt{pktIn}}{\texttt{pkt,sw}}
%		\REQUIRE $pkt, sw$
	%	\ENSURE  out
	
	\Comment{\textcolor{gray}{Round-robin rotation: \emph{server} is initialised as 0}}
	
	%	\NoNumber{test}
		\State $server \leftarrow server \bmod 2 + 1$
		\State $send\_message\{PacketOut(pkt, server), sw\big)$
		
		\Comment{\textcolor{gray}{Initialisation of \emph{rule1}}}
		
		\State $\mathrlap{rule1.srcIP}\hphantom{rule2.destTCPport} \gets pkt.srcIP$
		\State $\mathrlap{rule1.srcTCPport}\hphantom{rule2.destTCPport} \gets pkt.srcTCPport$
		\State $\mathrlap{rule1.fwdPort}\hphantom{rule2.destTCPport} \gets server$
		
		\Comment{\textcolor{gray}{Initialisation of symmetric \emph{rule2}}}

		\State $\mathrlap{rule2.srcIP}\hphantom{rule2.destTCPport} \gets server$
		\State $\mathrlap{rule2.destIP}\hphantom{rule2.destTCPport} \gets pkt.srcIP$
	
		\State $rule2.destTCPport \gets pkt.srcTCPport$
		
		\State $\mathrlap{rule2.fwdPort}\hphantom{rule2.destTCPport} \gets 3$
		\State $\mathrlap{rule2.flowRmvd}\hphantom{rule2.destTCPport} \gets true$
		
		\Comment{\textcolor{gray}{Deployment of rules}}

		\State $send\_message\big(FlowMod\big(add(rule1)\big), sw\big)$
		\State $send\_message\big(FlowMod\big(add(rule2)\big), sw\big)$	
%		\FOR {$i = l-2$ to $0$}
%		\STATE statements..
%		\IF {($i \ne 0$)}
%		\STATE statement..
%		\ENDIF
%		\ENDFOR
	%	\RETURN $P$
	\EndProcedure
	
	\item[]
		
	\Procedure{\texttt{flowRmv}}{\texttt{rule,sw}}

	\State $exhaustedConnections[rule.srcIP]++$

	\EndProcedure
	
	\end{algorithmic}
\label{alg:rr}
\end{algorithm}

When a connection is initiated by a client (e.g., web browser) sending an HTTP request message to the web server (packet \emph{pkt} in CP\ref{alg:rr}), the latter will initiate a return connection back to the client on an logical ephemeral TCP port that the client had created upon initiating the request, and which is allocated for short term use. The controller therefore should automatically open the ephemeral port for this return connection (by installing \emph{rule2} with random hard-timeout), and only then the server can access the client. Each of the two logical connections is represented by a mapping between TCP/IP socket pairs (a socket is an endpoint consisting of the IP address and the TCP or UDP port number).
The LB follows a round-robin scheduling forwarding each client's request to a different server based on a rotating list.
In order to keep track of the clients' requests that have been served by a server, we introduce the array \texttt{exhaustedConnections[servers]} which is updated upon arrival of the respective flow-remove to the controller.
The property: ``\textit{For every execution, all servers of the web farm will eventually be assigned at least an access request from a client}" (formally, $ \mathbf{A} \Diamond ~  \forall s \in Servers \!~. \!~  exhaustedConnections[s] > 0$)\footnote{\textbf{A} is the universal computation path quantifier.}, however, is contradicted by a \emph{(no)match}-action which is permanently executing due to the $(0,\infty)$ abstraction. This means that other actions might have to wait infinitely long before getting their turn (they are starving) and as such $(0,\infty)$ turns out to be an unfair strategy (the \emph{(no)match}-cycles are unfair traces).
Processing the same packet consequently is unrealistic though. Hence, in order to eliminate such executions that do not represent actual paths in the real-world systems, and allow the system to make progress, we impose \emph{fairness constraints} on top of the transition system.

%during which one process or event is infinitely ignored by the scheduler or one processor is infinitely faster than others. It is important to rule out those counterexamples and utilizes the computational resource to identify the real bugs. However, ruling out counterexample due to lack of different fairness systematically is non-trivial. 

\subsection{$(0,\infty)^{fair}$ Abstraction}
In order to rule out unrealistic infinite behaviours and deal with liveness properties, we update the semantics of our model so that a packet is processed (\emph{match/nomatch}-actions) only if it has not been processed in the previous state of an execution. 
For this, each \emph{(no)match} action, when taken, sets a \emph{fair} flag for the packet $p$ at stake, the negation of which guards the \emph{(no)match} transitions for $p$. In addition, the \emph{fair} flags are reset upon firing of any other (than \emph{(no)match}) transition. This way, the unfair self-loops are broken by performing a stutter division to self-looping states (see Figure \ref{fig:fair}), and making the model acyclic. 

\begin{figure}[h] 
	\centering
%	\captionsetup{width=1\textwidth}
	\includegraphics[width=.6\linewidth]{figures/fair.pdf}  
	\caption{Strengthening the $(0,\infty)$ abstraction: the unfair \textcolor{red}{self-loop} (left) is ruled out by splitting $\color{red}s$ into $s'$ and $s''$ (right)}
	\label{fig:fair}
%	\vspace{-.3cm}
\end{figure}

Each self-looping state of a path is transmutated into two stuttering states differing only in the value of \emph{fairness}.

Given a controller program \textsc{cp}, a topology $\lambda$ and SDN network model $\mathcal{M}_{(\lambda,\textsc{cp})} = (S, A, \hookrightarrow, s_0, AP, L)$, define
$\mathcal{M}_{(\lambda,\textsc{cp})}^{fair} = (S^f, A^f, \hookrightarrow^f, s_0, AP, L)$,
where $\hookrightarrow^f$ is, informally, defined as above and
$S^f$ the set of states reachable from $s_0$ under $\hookrightarrow^f$.

\begin{theorem}[Fairness]
	Given a property $\varphi$, $\mathcal{M}_{(\lambda,\textsc{cp})}^{fair}$ satisfies $\varphi$ iff $\mathcal{M}_{(\lambda,\textsc{cp})}$ satisfies $\varphi$\footnote{All the theorems in this paper are orthogonal to each other.}
		\label{fairness}
\end{theorem}

The theorem above is straightforward and is actually short for saying that for each path in $\mathcal{M} $ there exists a stutter-trace equivalent path in $\mathcal{M}^{fair} $, and vice-versa, denoted $\mathcal{M}   \stutt \mathcal{M}^{fair} $.

}	
	
\section{Experimental Evaluation}
\label{eval}

In this section we experimentally evaluate the proposed extensions in terms of verification performance and scalability. We use a realistic controller program that enables a network switch to act both as a load balancer and stateful firewall (see \S\ref{subsection:cp}-CP\ref{alg:fwlb}). The load balancer keeps track of the active sessions between clients and servers in the cluster (see Figure \ref{fig:lb}), while, at the same time, only allowing specific clients to access the cluster. Soft state is employed here so that flow entries for completed sessions (that were previously admitted by the firewall) time out and are deleted by the switch without having to explicitly monitor the sessions and introduce unnecessary signalling (and overhead). In the underlying SDN model, the \emph{frmvd} action is fired, which, in turn, deletes the flow entry from the switch's table and notifies the controller of that. This enables the \emph{fsync} action that calls the flow removal handler. 

\begin{figure}[h] 
	\centering
	%	\captionsetup{width=1\textwidth}
	\includegraphics[width=.9\linewidth]{figures/lb.pdf}  
	\caption{Four clients and two servers connecting to an OF-switch. $\protect\redlaptop$ is not white-listed.}
	\label{fig:lb}
	%	\vspace{-.3cm}
\end{figure}

A session is initiated by a client which sends a packet (\emph{pkt} in \S\ref{subsection:cp}-CP\ref{alg:fwlb}) to a known cluster address; servers are not directly visible to the client. Sessions are bi-directional therefore the controller must install respective rules to the switch to allow traffic to and from the cluster. The property that is checked here is that 
	\begin{inparaenum}[(1)]
		\item the traffic (i.e. number of sessions, assuming they all produce similar traffic patterns), and resulting load, is uniformly distributed to all available servers, and 
		\item that traffic from non-whitelisted clients is blocked. 
		\end{inparaenum}
		More concretely, ``\textit{a packet from a `dodgy' address should never reach the servers, and the difference between the number of assigned sessions at each server should never be greater than 1}", formally, 
\begin{equation} \tag{$\varphi$} \label{eq:property}
\begin{aligned}[l]
 \Box ~\big(  \forall s_i,s_j \in \mathit{Servers} \!~\forall \mathit{pkt} \in \mathit{s_i.rcvq} \!~&.&\\
  \neg \mathit{pkt.src} = \mathit{dodgy} \!~\land\!~ \abs{\mathit{sLoad}[s_i] - \mathit{sLoad}[s_j]} &<&2 \big)
 \end{aligned}
 \end{equation} where $\mathit{sLoad}$ stores the active session count for each server.

In the first (buggy) version of the controller's packet handler (shaded grey in \S\ref{subsection:cp}-CP\ref{alg:fwlb}) and flow removal handler \S\ref{subsection:cp}-CP\ref{alg:naive}, the controller program assigns new sessions to servers in a round-robin fashion and keeps track of the active sessions (array \emph{deplSessions} in the provided pseudocode). When a session expires, the respective flow table entry is expected to expire and be deleted by the switch without any signalling between the controller, clients or servers\footnote{It is worth stressing that modelling such functionality is not supported by existing model checking approaches, such as \cite{mocs} and \cite{Kuai}, where flow table entries can only be explicitly deleted by the controller.}. As stated above, this controller program does not satisfy safety property \ref{eq:property} because the controller does nothing to rebalance the load when a session expires. Our model implementation\footnote{{\sc Uppaal} \cite{UPPAAL} is the back-end verification engine for MoCS and all experiments were run on an 18-Core iMac pro, 2.3GHz Intel Xeon W with 128GB DDR4 memory.} discovered the bug in the topology shown in Figure \ref{fig:lb} with 3 sessions in 11ms exploring 202 states.

In the second (still buggy) version of the controller, session scheduling is more sophisticated (shaded blue in \S\ref{subsection:cp}-CP\ref{alg:fwlb}); a session is assigned to the server with the least number of active sessions. Although the updated load balancing algorithm does keep track of the active sessions per server, this controller is still buggy because no rebalancing takes place when sessions expire. In a topology of 4 clients and 2 servers, we were able to discover the bug in 52ms after exploring 714 states.

We fix the bug by allowing the controller program to rebalance the active sessions, when 
	\begin{inparaenum}[(1)]
		\item a session expires and 
		\item the load is about to get out of balance, 
		\end{inparaenum}
		by moving one session from the most-loaded to the least-loaded server (\S\ref{subsection:cp}-CP\ref{alg:redistr}). In the same topology as above, we verified the property in 625ms after exploring 15068 states.\footnote{Note that the \emph{fsync}-optimisation was not enabled in the examples above.}

Next, we evaluate the performance of the proposed model and extensions for verifying the correctness of the property in a given SDN. We do that by verifying \ref{eq:property} with the correct controller program, discussed above, and scaling up the topology in terms of clients, servers and active sessions. Results are listed in Table~\ref{fig:tableexp} and state exploration is  illustrated in Figure~\ref{fig:lbexp}.

 Table~\ref{fig:tableexp} lists performance of the model checker for verifying the  correct controller program    with PORs disabled on the left  and with PORs enabled on the right, respectively. For each chosen topology we list the number of states explored, CPU  time used, and  memory used. The   topology is  shaped as in Figure~\ref{fig:lb}, and  parametrised by the number of clients (ranging from 3 to 5) and servers (ranging from 2 to 5), as indicated in Table \ref{fig:tableexp}. The number of required packets  and rules, respectively, is shown in grey. These numbers are always uniquely determined by the choice of topology.  Where there are no entries in the table (indicated by a dash) the verification did not terminate within 24 hours.
 
 The results clearly show that the verification scales well with the number of servers but not with the number of clients. 
 The reason for the latter is that for each additional client an additional packet is sent, which, according to programs \S\ref{subsection:cp}-CP\ref{alg:fwlb} and CP\ref{alg:redistr}, leads to 7 additional actions without timeouts and to 12 with timeouts. The causal ordering of these actions is shown in Fig.~\ref{fig:tree}. 
 The sub-branch in red shows the actions that appear due to a timeout of the added rule.
Thus, the number of states is exponential in the number of clients: every new  action in Fig.~\ref{fig:tree} leads to a new change of state, thus doubling the possible number of states.  This exponential blow-up happens  whether we have timeouts or not. With timeouts, however,  we  have worse exponential complexity as there are more new states generated.
 
 \begin{figure}[!htb]
 	\centering
 		\vspace{-.4cm}
 	%	\captionsetup{width=1\textwidth}
{\small \begin{forest}
 	for tree={
 		% rounded corners, draw, align=center, top color=white, bottom color=blue!20,
 		edge+=->,
 	%	l sep'+=10pt,
 	}, 
 	[$\mathit{send(pkt)}$% root
 		[$\mathit{nomatch(pkt)}$
 			[$\mathit{ctrl(pkt)}$ 	
 				 			[$\mathit{add(rule)}$
 			[$\mathit{match(pkt, rule)}$
 			]	 			
 			]
 				 			[$\mathit{add(rule_s)}$
 			[$\mathit{frmvd(rule_s)}$
 			[$\mathit{fsync(rule_s})$
 			[$\mathit{mod}(r)$, colour my roots=red
 			]	 			
 			[$\mathit{mod}(r_s)$, colour my roots=red
 			]	 			
 			]	 			
 			]	 			
 			]		
					 			[$\mathit{fwd(pkt)}$, colour my roots=black	 			
				[$\mathit{recv(pkt)}$
				]	 			
				]
 			] 		 		
 		]
 	]
 \end{forest}}
	 
	 %\caption{Unfolding the causal relation between actions for an additional packet \emph{pkt}; for the sake of simplicity, we show only the arguments pointing to the packet/rule at issue while maintaining the same nomenclature as that used in the pseudocode.}
	\caption{The causal enabling  relation between actions for an additional packet \emph{pkt}; only the  relevant arguments are shown using the same nomenclature as in the pseudocode.}
	 		\label{fig:tree}
%	\vspace{-.3cm}
\end{figure}

 The results also demonstrate that, for network setups with three clients, the POR optimisation reduces the state space  -- and thus the verification time --
 by about half. For more clients the reduction is far more significant, given that the verification of the unoptimised model did not terminate within 24 hours. This is not surprising as the number of possible interleavings is massively increased  by the non-deterministic timeout events.

\begin{figure}
	\centering
	%	\captionsetup{width=1\textwidth}
	\includegraphics[width=0.9\linewidth]{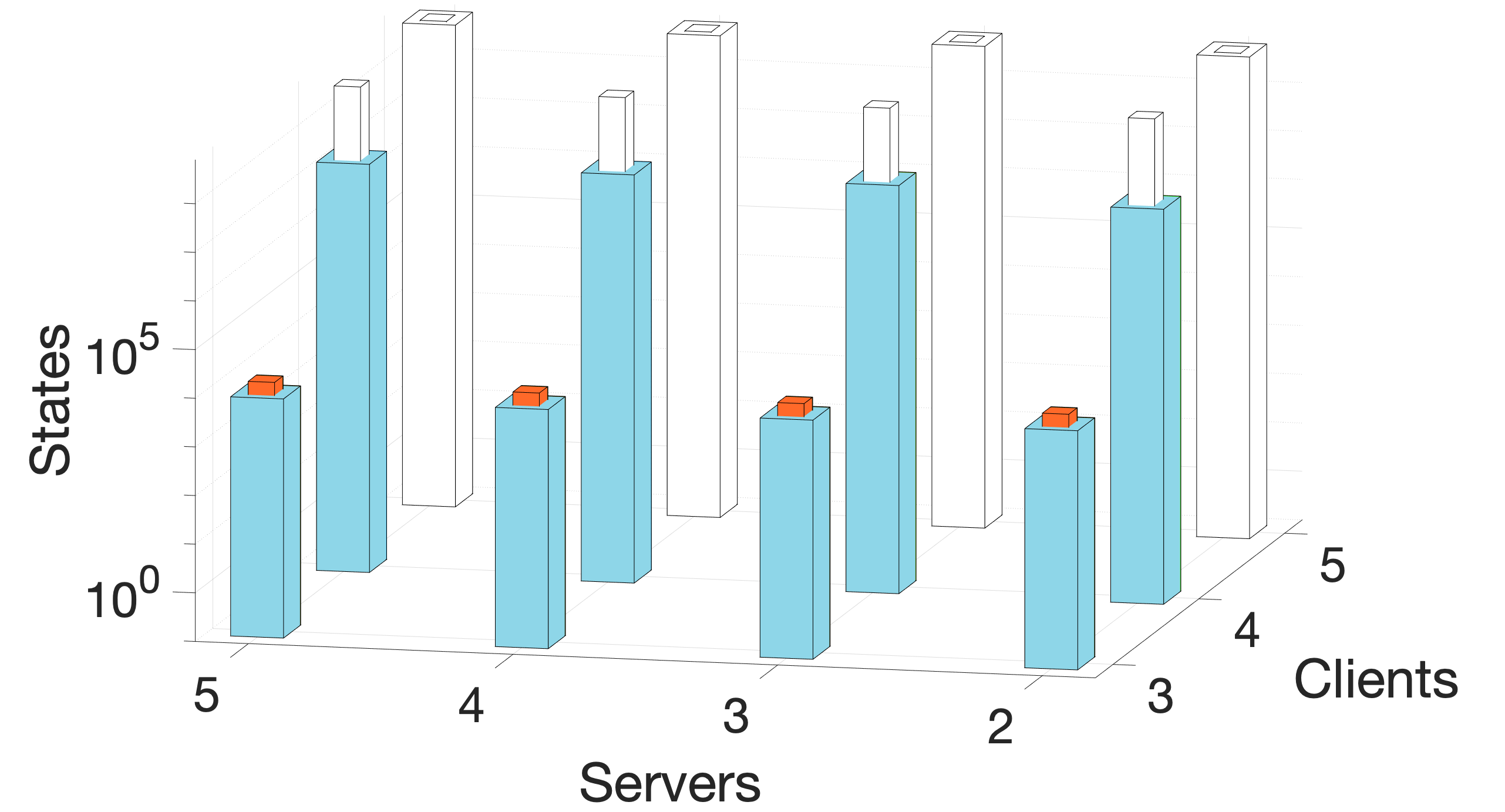}  
	\caption{Explored States (logarithmic scale). Wide bars represent the optimised model and narrow ones (inside) the unoptimised model.
		Uncoloured bars represent non-termination.}
% Naturally, non-termination with optimised model implies non-termination with unoptimised. `Verification time' and `Memory' follow a similar pattern.}
	\label{fig:lbexp}
	%	\vspace{-.3cm}
\end{figure}

\begin{table}
	\centering
	%	\captionsetup{width=1\textwidth}
%\caption{Increasing the number of clients and servers. Omitted entries indicate non-termination.}
		\caption{Performance by  number of clients and servers}
	\includegraphics[width=1\linewidth]{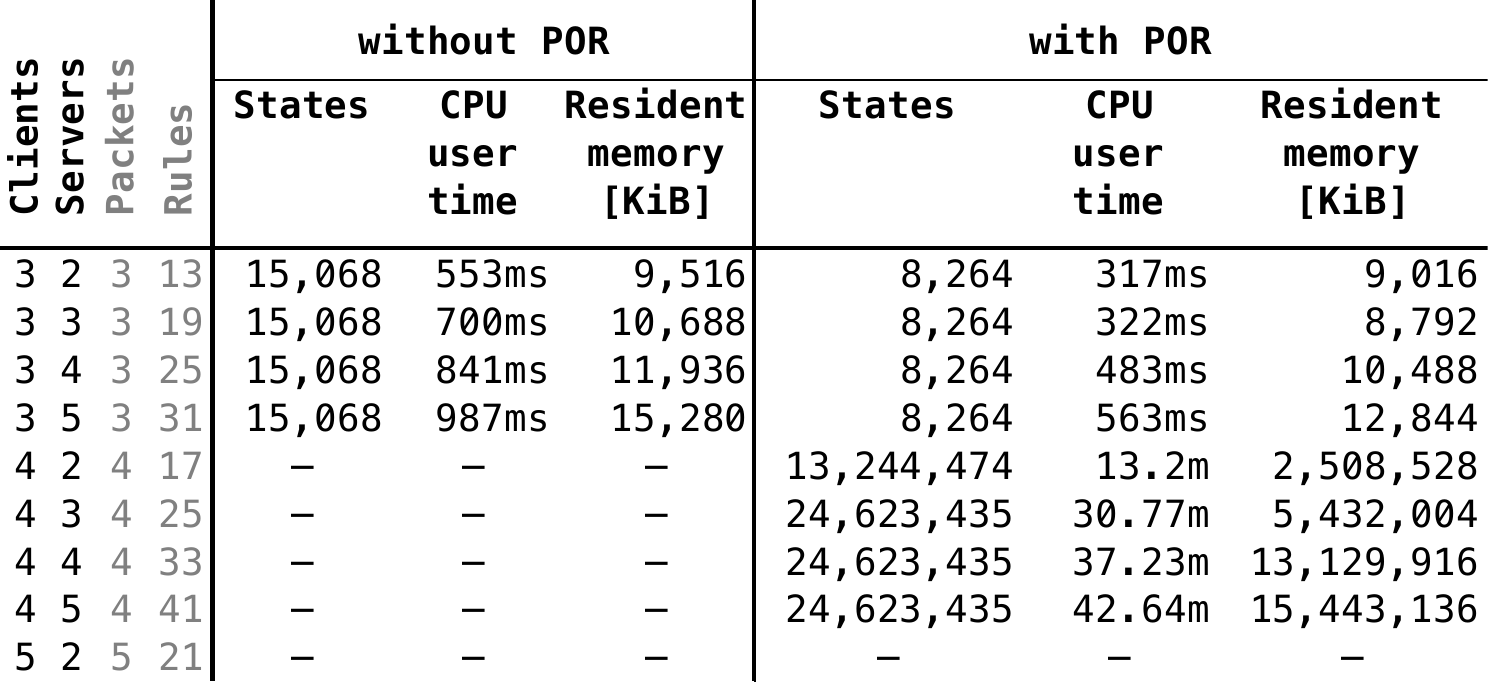}  
	\label{fig:tableexp}
	\vspace{-.3cm}
\end{table}

\section{Controller Programs}
\label{subsection:cp}

%\ovalbox{CP\ref{alg:fwlb}} 
CP\ref{alg:fwlb} implements the \textit{PacketIn} message handler that processes packets sent by switches when the \textit{nomatch} action is fired. The two different versions of functionality discussed in the paper are defined by the \textit{leastConnectionsScheduling} constant. When \textit{leastConnectionsScheduling} is false, server selection is done in a round-robin fashion, whereas, in the opposite case, the controller assigns the new session to the server with the least number of active sessions.
%
%\begin{minipage}{.45\textwidth}
%
\MyBox{starta}{enda}{blue}
\MyBox[-3.2ex]{startb}{endb}{gray}
\begin{algorithm}\small
	\caption{\emph{PacketIn} Message Handler}
	\algrenewcommand\algorithmicprocedure{\textbf{handler}}
	\begin{algorithmic}[1]
		\renewcommand{\algorithmicrequire}{\textbf{Input:}}
		\renewcommand{\algorithmicensure}{\textbf{Output:}}
		\Procedure{$\mathit{pktIn}$}{$\mathit{pkt,sw}$}
		%		\REQUIRE $pkt, sw$
		%	\ENSURE  out
		
		%	\Comment{\textcolor{gray}{\emph{server, sLoad}[], \emph{deplSessions}[]  0-initialised}}
		
		\If{ $\mathit{pkt.srcIP} \neq \mathit{dodgy\_client} $ }
		
		\If{$\mathit{\neg deplSessions[pkt.srcIP]}$}
		
		\If{$\mathit{\neg leastConnectionsScheduling}$}
		
		\tikzmark{startb}
		~~~~~~~~\Comment{\textcolor{gray}{\small \texttt{Round-Robin rotation} }}\hfill \tikzmark{endb}
		
		\State $\mathit{server \leftarrow server} \bmod  2 + 1$ 
		
		\Else
		
		\tikzmark{starta}
		~~~~~~~~\Comment{\textcolor{gray}{\small \texttt{Least-Connections scheduling}}}\
		\State $\mathit{server \leftarrow min\big(sLoad[]\big)}$ \hfill \tikzmark{enda}
		
		\EndIf
		
		~~~~~~~~~\Comment{\textcolor{gray}{\small \texttt{Initialisation of flow to \emph{server}}}}
		
		\State $\mathrlap{\mathit{rule}.\mathit{srcIP}}\hphantom{\mathit{rule2.destTCPport}} \gets \mathit{pkt.srcIP}$
		\State $\mathrlap{\mathit{rule}.\mathit{in\_port}}\hphantom{\mathit{rule2.destTCPport}} \gets \mathit{pkt.in\_port}$
		\State $\mathrlap{\mathit{rule}.\mathit{fwdPort}}\hphantom{\mathit{rule2.destTCPport}} \gets \mathit{server}$
		
		~~~~~~~~~\Comment{\textcolor{gray}{\footnotesize \texttt{Initialisation of symmetric \emph{rule}\textsubscript{s}}}}
		
		\State $\mathrlap{\mathit{rule_s}.\mathit{srcIP}}\hphantom{\mathit{rule2.destTCPport}} \gets \mathit{server}$
		\State $\mathrlap{\mathit{rule_s}.\mathit{destIP}}\hphantom{\mathit{rule2.destTCPport}} \gets \mathit{pkt.srcIP}$
		
		%	\State $rule2.destTCPport \gets pkt.srcTCPport$
		
		\State $\mathrlap{\mathit{rule_s}.\mathit{fwdPort}}\hphantom{\mathit{rule2.destTCPport}} \gets \mathit{pkt.in\_port}$
		\State $\mathrlap{\mathit{rule_s}.\mathit{timeout}}\hphantom{\mathit{rule2.destTCPport}} \gets \mathit{true}$
		
		~~~~~~~~~\Comment{\textcolor{gray}{\footnotesize \texttt{Initialisation of drop rule \emph{rule}$_d$}}}
		
		\State $\mathrlap{\mathit{rule_d.srcIP}}\hphantom{\mathit{rule2.destTCPport}} \gets \mathit{dodgy\_client} $
		\State $\mathrlap{\mathit{rule_d.fwdPort}}\hphantom{\mathit{rule2.destTCPport}} \gets \mathit{drop}$

		~~~~~~~~~\Comment{\textcolor{gray}{\footnotesize \texttt{Deployment of rules}}}
		
		\State $\mathit{send\_message\big(FlowMod\big(add(rule)}\big), \mathit{sw\big)}$
		\State $\mathit{send\_message\big(FlowMod\big(add(rule_s)}\big), \mathit{sw\big)}$	
		\State $\mathit{send\_message\big(FlowMod\big(add(rule_d)\big), sw\big)}$	
		
		~~~~~~~~~\Comment{\textcolor{gray}{\footnotesize \texttt{Update firewall state table}}}
		
		\State $\mathit{sLoad[server]\texttt{++}}$
		\State $\mathit{deplSessions[pkt.srcIP]} \gets true$
		\EndIf
		
		~~~\Comment{\textcolor{gray}{\footnotesize \texttt{PacketOut: sending pkt out through sw}}}
		\State $\mathit{send\_message\{PacketOut(pkt, server), sw\big)}$
		\EndIf
		
		\EndProcedure
				
	\end{algorithmic}
	\label{alg:fwlb}
\end{algorithm}
%\end{minipage}
%
%
%
%

CP\ref{alg:naive} implements the naive (and buggy) \textit{FlowRemoved} message handler. When soft state expires in the network, the handler merely updates its local state to reflect the update in the load.

%\begin{minipage}{.45\textwidth}
\begin{algorithm}\small
	\caption{Naive \emph{FlowRemoved} message handler}
	\algrenewcommand\algorithmicprocedure{\textbf{handler}}
	\begin{algorithmic}[1]
		\renewcommand{\algorithmicrequire}{\textbf{Input:}}
		\renewcommand{\algorithmicensure}{\textbf{Output:}}
		
		\Procedure{$\mathit{flowRmvd}$}{$\mathit{rule_s,sw}$}
		
		\State $\mathit{sLoad[rule_s.srcIP]\texttt{--}}$  
		\State $\mathit{deplSessions[rule_s.destIP]} \gets \mathit{false}$
				
		\EndProcedure
		
	\end{algorithmic}
	\label{alg:naive}
\end{algorithm}
%\end{minipage}%
%
CP\ref{alg:redistr} implements a more sophisticated (and correct) \textit{FlowRemoved} message handler. When soft state expires in the network, the handler updates its local state to reflect the update in the load and re-assigns active sessions from the most to the least loaded server, by updating the flow table of the switch accordingly.

\begin{algorithm}\small
	\caption{Correct \emph{FlowRemoved} message handler}
	\algrenewcommand\algorithmicprocedure{\textbf{handler}}
	\begin{algorithmic}[1]
		\renewcommand{\algorithmicrequire}{\textbf{Input:}}
		\renewcommand{\algorithmicensure}{\textbf{Output:}}
		
		\Procedure{$\mathit{flowRmvd}$}{$\mathit{rule_s,sw}$}

		\State $\mathit{sLoad[rule_s.srcIP]\texttt{--}}$  
		\State $\mathit{deplSessions[rule_s.destIP]} \gets \mathit{false}$
		
		\If{$\mathit{max\big(sLoad[]\big) - min\big(sLoad[]\big)} > 1$}
		\State $\mathit{r \gets the~rule ~in~sw.ft~with~ fwdPort = max(sLoad[])}$
		\State $\mathit{r_s \gets symmetric ~rule ~of ~r}$
		\State $\mathit{cm \gets mod\big(r, fwdPort \gets min(sLoad[])\big)}$
		\State $\mathit{cm_s \gets mod\big(r_s, srcIP \gets min(sLoad[])\big)}$
		\State $\mathit{send\_message\big(FlowMod(cm, sw)\big)}$
		\State $\mathit{send\_message\big(FlowMod(cm_s, sw)\big)}$
		\State $\mathit{sLoad\big[max\big(sLoad[]\big)\big]\texttt{--}}$  
		\vspace{1mm}
		\State $\mathit{sLoad\big[min\big(sLoad[]\big)\big]\texttt{++}}$  
		
		\EndIf	
		\EndProcedure
		
	\end{algorithmic}
	\label{alg:redistr}
\end{algorithm}

\ignore{
	\begin{minipage}{.45\textwidth}
		\MyBox{starta}{enda}{blue}
		\MyBox[-3.2ex]{startb}{endb}{gray}
		\MyBox{startc}{endc}{yellow}
		\MyBox{startd}{endd}{green}
		\begin{algorithm}[H]
			\caption{FW/LB Combo}
			\algrenewcommand\algorithmicprocedure{\textbf{handler}}
			\begin{algorithmic}[1]
				\renewcommand{\algorithmicrequire}{\textbf{Input:}}
				\renewcommand{\algorithmicensure}{\textbf{Output:}}
				\Procedure{$pktIn$}{$pkt,sw$}
				%		\REQUIRE $pkt, sw$
				%	\ENSURE  out
				
				\Comment{\textcolor{gray}{\emph{server, sLoad}[], \emph{deplSessions}[]  0-initialised}}
				
				\If{ $\mathit{pkt.srcIP} \neq \mathit{dodgy\_client} $ }
				
				\If{$\mathit{not ~deplSessions[pkt.srcIP]}$}
				
				\tikzmark{startb}
				~~~~~~~~\Comment{\textcolor{gray}{Round-Robin rotation: }}\hfill \tikzmark{endb}
				
				\State $server \leftarrow server \bmod 2 + 1$ 
				
				\tikzmark{starta}
				~~~~~~~~\Comment{\textcolor{gray}{Least-Connection scheduling}}\
				\State $server \leftarrow min\big(sLoad[]\big)$ \hfill \tikzmark{enda}
				
				~~~~~~~~~\Comment{\textcolor{gray}{Initialisation of \emph{rule1}}}
				
				\State $\mathrlap{\mathit{rule1.srcIP}}\hphantom{\mathit{rule2.destTCPport}} \gets \mathit{pkt.srcIP}$
				\State $\mathrlap{\mathit{rule1.in\_port}}\hphantom{\mathit{rule2.destTCPport}} \gets \mathit{pkt.in\_port}$
				\State $\mathrlap{\mathit{rule1.fwdPort}}\hphantom{\mathit{rule2.destTCPport}} \gets \mathit{server}$
				
				~~~~~~~~~\Comment{\textcolor{gray}{Initialisation of symmetric \emph{rule2}}}
				
				\State $\mathrlap{\mathit{rule2.srcIP}}\hphantom{\mathit{rule2.destTCPport}} \gets \mathit{server}$
				\State $\mathrlap{rule2.destIP}\hphantom{\mathit{rule2.destTCPport}} \gets \mathit{pkt.srcIP}$
				
				%	\State $rule2.destTCPport \gets pkt.srcTCPport$
				
				\State $\mathrlap{\mathit{rule2.fwdPort}}\hphantom{\mathit{rule2.destTCPport}} \gets \mathit{pkt.in\_port}$
				\State $\mathrlap{\mathit{rule2.flowRmvd}}\hphantom{\mathit{rule2.destTCPport}} \gets \mathit{true}$
				
				~~~~~~~~~\Comment{\textcolor{gray}{Initialisation of drop rule \emph{rule}$_d$}}
				
				\State $\mathrlap{\mathit{rule_d.srcIP}}\hphantom{\mathit{rule2.destTCPport}} \gets \mathit{dodgy\_client} $
				\State $\mathrlap{\mathit{rule_d.fwdPort}}\hphantom{\mathit{rule2.destTCPport}} \gets \mathit{drop}$

				~~~~~~~~~\Comment{\textcolor{gray}{Deployment of rules}}
				
				\State $\mathit{send\_message\big(FlowMod\big(add(rule1)\big), sw\big)}$
				\State $\mathit{send\_message\big(FlowMod\big(add(rule2)\big), sw\big)}$	
				\State $\mathit{send\_message\big(FlowMod\big(add(rule_d)\big), sw\big)}$	
				
				~~~~~~~~~\Comment{\textcolor{gray}{Update firewall state table}}
				
				\State $\mathit{sLoad[server]++}$
				\State $\mathit{deplSessions[pkt.srcIP]} \gets true$
				\EndIf
				
				~~~\Comment{\textcolor{gray}{PacketOut: sending \texttt{pkt} out through the \texttt{sw}}}
				\State $\mathit{send\_message\{PacketOut(pkt, server), sw\big)}$
				\EndIf
				
				\EndProcedure
				
				\item[]
				
				\Procedure{$flowRmv$}{$rule,sw$}
				
				\tikzmark{startc}
				\Comment{\textcolor{gray}{Naive callback handling}}			
				\State $\mathit{sLoad[rule.srcIP]--}$  
				\State $\mathit{deplSessions[rule.destIP]} \gets false$\hfill \tikzmark{endc}
				
				\tikzmark{startd}
				\Comment{\textcolor{gray}{Load Redistribution}}			
				\State $\mathit{sLoad[rule.srcIP]--}$  
				
				\If{$\mathit{max\big(sLoad[]\big) - min\big(sLoad[]\big) > 1}$}
				\State $\mathit{r \gets rule \in sw.ft~|~ fwdPort = max(sLoad[])}$
				\State $\mathit{r_s \gets symmetric ~rule ~of ~r}$
				\State $\mathit{cm \gets mod\big(r, fwdPort \gets min(sLoad[])\big)}$
				\State $\mathit{cm_s \gets mod\big(r_s, srcIP \gets min(sLoad[])\big)}$
				\State $\mathit{send\_message\big(FlowMod(cm, sw)\big)}$
				\State $\mathit{send\_message\big(FlowMod(cm_s, sw)\big)}$
				\State $\mathit{sLoad\big[max\big(sLoad[]\big)\big]--}$  
				\vspace{1mm}
				\State $\mathit{sLoad\big[min\big(sLoad[]\big)\big]++}$  
				
				\EndIf	\hfill \tikzmark{endd}
				\EndProcedure
				
			\end{algorithmic}
			\label{alg:fwlb}
		\end{algorithm}
	\end{minipage}%
}

\section{Conclusion and Future Work}
We have proposed model checking of SDN networks with flow entries (rules) that time out. Timeouts pose problems due to the great number of resulting interleavings to be explored. Our approach is the first one to deal with timeouts, exploiting partial-order reductions, and performing reasonably well for small networks. 
We demonstrated that bug finding works well for SDN networks in the presence of flow entry timeouts. Future work includes exploring  flow removals with timeouts that are constrained by integer to \emph{enforce certain orderings} of timeout messages as well as improvements in performance, for instance, by using bounded model checking tools for concurrent programs. 
%
%
%While  our current approach still cannot scale to realistically large sizes (e.g., data centres), it is the first one to deal with timeouts, exploiting at the same time partial-order reductions for timeouts.
%We demonstrated that bug finding works reasonably well for SDN networks in the presence of flow entry timeouts. It remains as further work, however, to investigate the minimum size of network topology required to discover bugs in controllers.
%Future work also includes exploring  flow removals with \emph{timeouts} that are constrained by integer to enforce certain orderings of timeout messages as well as improvements in performance, for instance, by using bounded model checking tools for concurrent programs. 
%
%
%
\newpage

% trigger a \newpage just before the given reference
% number - used to balance the columns on the last page
% adjust value as needed - may need to be readjusted if
% the document is modified later
%\IEEEtriggeratref{8}
% The "triggered" command can be changed if desired:
%\IEEEtriggercmd{\enlargethispage{-5in}}

% references section

% can use a bibliography generated by BibTeX as a .bbl file
% BibTeX documentation can be easily obtained at:
% http://mirror.ctan.org/biblio/bibtex/contrib/doc/
% The IEEEtran BibTeX style support page is at:
% http://www.michaelshell.org/tex/ieeetran/bibtex/
%\bibliographystyle{IEEEtran}
% argument is your BibTeX string definitions and bibliography database(s)
%\bibliography{IEEEabrv,../bib/paper}
%

\bibliographystyle{IEEEtran}
\bibliography{ref}

\appendix

\noindent\emph{A ~~Proofs}
\label{appendix:proofs}

\begin{customlemma}{\ref{lemma:safe}}[{\sc Safeness}] For transition system $\mathcal{M}_{(\lambda,\textsc{cp})} = (S, A, \hookrightarrow, s_0, AP, L)$ and a formula $\varphi \in \text{LTL}_{ \setminus \{\bigcirc \}}$, 
	$\alpha = \mathit{fsync(sw, r, cs)}$  is safe   iff  the following two conditions are satisfied:
	
	\begin{quote}
	\begin{description}[leftmargin=!,labelwidth=\widthof{\bfseries Independence}]
		\item[Independence] {\sc cp} is not order-sensitive 
		\item[Invisibility] {\upshape\texttt{if} }$Q(q)$ in $\mathit{AP}$ occurs in $\varphi$, {\upshape\texttt{then}}  $\alpha$ is $\varphi$-invariant
	\end{description}	
	\end{quote}
		
	%	where $\mathit{Safe}_i$ are given in Table \ref{tab:safeness}. 
\end{customlemma}

\begin{proof}
	To show safety we need to show two properties: \emph{independence} (action is independent of any other action) and \emph{invisibility} w.r.t.\ the context, in particular controller program, topology function and formula $\varphi$.\\
	
	\noindent\emph{Independence}:
	Recall that two actions $\alpha$ and $\beta \neq \alpha$ are independent iff for any state $s$ such that $\alpha\in A(s)$ and $\beta\in A(s)$:
	\begin{enumerate}[label=(\arabic*), leftmargin=.5in]
		\item $\alpha \in A(\beta(s))$ and $\beta\in A(\alpha(s))$  
		\item $\alpha(\beta(s))=\beta(\alpha(s))$
	\end{enumerate}
	\vspace{.6cm}
	\begin{enumerate}[label=(\arabic*)]
\item It can be easily checked that no instance of safe actions $\mathit{fsync}(\cdot)$ disables any other action, nor is any safe $\mathit{fsync}(\cdot)$ disabled by any other action, so the first condition of independence holds.
	\\
\item For any safe $\alpha = \mathit{fsync}(\cdot)$ and any other action $\beta$ we can assume already that they meet Condition (1). To show that any interleaving with any action $\beta \neq \alpha$ leads to the same state, we observe that

		\begin{itemize}[label=$\blacktriangleright$]
			\item if $\beta$ is not an \emph{fsync, ctrl} or \emph{bsync} action, then the mutations of queues by these actions do not interfere with each other. 
			\item The interesting cases occur when $\beta$ is in $\mathit{\{fsync(\cdot), ctrl(\cdot), bsync(\cdot)\}}$. 
			From the first condition we know that {\sc cp} is not order-sensitive, which implies that $\alpha$ and $\beta$ are independent.
			Order-insensitivity is a relatively strong condition but it ensures correctness of the lemma and thus partial order reduction.\footnote{Generalisations by a more clever analysis of the controller program are a future research topic.} Thus any interleaving of $\alpha$ and $\beta$ leads to the same state.

		\end{itemize}
	\end{enumerate}

	\vspace{.6cm}	
	
	\noindent\emph{Invisibility}: 
 $\alpha = \mathit{fsync(sw,r,cs)}$ may only affect \emph{frq}, $sw'.\mathit{fq}$, $sw'.cq$ (for some switches $sw'$), and the control state $\mathit{cs}$. 
		We know by definition of our Specification Language that an atomic proposition cannot refer to \emph{frq} or any $\mathit{fq}$, $\mathit{cq}$.
		In case the control state changes, $\alpha$ is invisible to $\varphi$ because of the second condition (\emph{Invisibility}) of Lemma \ref{lemma:safe}.
						
\end{proof}

\begin{customthm}{\ref{thm:frmvd}}[{\sc Flow-Removed Equivalence}]
	Given a property $\varphi \in \text{LTL}_{ \setminus \{\bigcirc \}}$, it holds that $\mathcal{M}_{(\lambda,\textsc{cp})}^{\mathit{fr}}$ satisfies $\varphi$ iff $\mathcal{M}_{(\lambda,\textsc{cp})}$ satisfies $\varphi$. 

\end{customthm}

\begin{proof} 
	%	\vspace{-0.2cm}	
If $\mathit{ample}(s)$ satisfies the following conditions:
\begin{enumerate}[label=C\arabic*]
	\item (Non)emptiness condition: $\varnothing  \neq \mathit{ample(s)} \subseteq A(s)$.
	\item Dependency condition: Let $s \xhookrightarrow[]{\alpha_1} s_1...\xhookrightarrow[]{\alpha_n} s_n \xhookrightarrow[]{\beta} t$ be a run in $\mathcal{M}$. If $\beta \in A  \setminus \mathit{ample}(s)$ depends on $\mathit{ample}(s)$, then $\alpha_i \in \mathit{ample(s)}$ for some $0 < i \leq n$, which means that in every path fragment of $\mathcal{M} $, $\beta$ cannot appear before some transition from $\mathit{ample(s)}$ is executed.
	\item Invisibility condition: If $\mathit{ample}(s) \neq A(s)$ (i.e., state $s$ is not fully expanded), then every $\alpha \in \mathit{ample}(s)$ is invisible.
	\item Every cycle in $\mathcal{M}^{\mathit{fr}}$ contains a fully expanded state $s$ (i.e.\ $\mathit{ample}(s)=A(s)$).% or, equivalently, where $ample(s)$ does not contain any safe action.
\end{enumerate} 
then for each path in $\mathcal{M} $ there exists a stutter-trace equivalent path in $\mathcal{M}^{\mathit{fr}}$, and vice-versa, denoted $\mathcal{M}   \stutt \mathcal{M}^{\mathit{fr}}$ -- as we now show.

	$ $\newline
\vspace{-.4cm}
\begin{enumerate}[label=\textit{C\arabic*}]
	\item The (non)emptiness condition is trivial since by definition of $ample(s)$ it follows that $\mathit{ample}(s) = \varnothing $ iff $A(s) = \varnothing $.
	\item By assumption $\beta \in A  \setminus \mathit{ample}(s)$ depends on $\mathit{ample}(s)$. But with our definition of $\mathit{ample}(s)$ this is impossible as all actions in
	$\mathit{ample}(s)$ are safe and by definition independent of all other actions.
	
	%		By contradiction. Suppose the choice of $ample(s)$ does not satisfy constraint \textit{C2}. Then there exists a run $\rho = s \xhookrightarrow[]{\alpha_1} s_1...\xhookrightarrow[]{\alpha_n} s_n \xhookrightarrow[]{\beta} t$ in $\mathcal{M}$ s.t. $\forall ~1 \leq i \leq n$, $\alpha_i \notin ample(s)$ and $\beta \in A  \setminus ample(s)$ and $\beta$ depends on $ample(s)$.
	%		\begin{itemize}
	%			\item If $ample(s) = A(s)$, then $\alpha_i \in ample(s)$. This is a contradiction.
	%			\item Otherwise, $ample(s) $ consists of only safe actions.
	%			%$ample(s) = \{    fwd(sw, pkt, pts)   	 \mid  pkt \in sw.fq ~\textup{in}~ s \land fwd(sw, pkt, pts) ~\textup{is hasty} \}$.
	%			%Since $\beta$ depends on $ample(s)$, by Lemma \ref{lemma_h_fwd_ind}, $\beta$ can only be a $h{\text -}fwd$ action. 
	%			%see inference in notebook
	%			Since for all $1 \leq i \leq n$, $\alpha_i$ is not a safe  action, $\beta \in A(s)$.
	%			By the definition of $ample(s)$, $\beta \in ample(s)$, which is a contradiction.
	%		\end{itemize}
	%
	\item The validity of the invisibility condition is by definition of $\mathit{ample}$ and safe actions.
	\item 
	%	Let $\mathcal{M}_{(\lambda,\textsc{cp})}  = (S, A, \hookrightarrow, s_0, AP, L)$ be the SDN network model from \S\ref{sec:modelSDN}, and safe actions defined as in Def.~\ref{def:safe}. 
	We now show that every cycle in $\mathcal{M}_{(\lambda,\textsc{cp})}^{\mathit{fr}}$ contains a fully expanded state $s$, i.e.\  a state $s$ such that $\mathit{ample}(s)=A(s)$.	
	%such that in every state $s$ of it $ample(s) \neq A(s)$.
	\label{cycle}
	By definition of $\mathit{ample}(s)$ it is equivalent to show that there is no cycle in $\mathcal{M}_{(\lambda,\textsc{cp})}^{\mathit{fr}}$ consisting of safe actions only.
	We show this by contradiction, assuming such a cycle of only safe actions exists.

	Distinguish two cases.
	
	\begin{enumerate}[leftmargin=1.2cm,label=\textit{Case \arabic*}]
		\item A sequence of safe actions of same type.
 Let $\rho$ an execution of $\mathcal{M}_{(\lambda,\textsc{cp})}^{\mathit{fr}}$ which consists of only one type of \emph{fsync}-actions: 
		$\rho = s_1 \xhookrightarrow{\mathit{fsync}(\mathit{sw}_1, r_1,\mathit{cs}_1)}_{\mathit{fr}} s_2 \xhookrightarrow{\mathit{fsync}(\mathit{sw}_2,r_2,\mathit{cs}_2)}_{\mathit{fr}} ... s_{i-1} \xhookrightarrow{\mathit{fsync}(\mathit{sw}_{i-1}r_{i-1},,\mathit{cs}_{i-1})}_{\mathit{fr}}  s_i$.
		Suppose $\rho$ is a cycle. 
		According to the \emph{fsync} semantics, for each transition $s \xhookrightarrow{\mathit{fsync(sw,r,cs)}}_{\mathit{fr}} s' $, 
		where $\mathit{s = (\pi,\delta, \gamma)}$, $\mathit{s' = (\pi', \delta', \gamma')}$, it holds that $\mathit{\gamma'.frq = \gamma.frq\setminus \{r\}}$ as we use sets to represent \emph{frq} buffers. 
		%$s = \big(\llbracket\pi\rrbracket_s, \llbracket\delta\rrbracket_s, \llbracket\gamma\rrbracket_s\big)$, $s' = \big(\llbracket\pi\rrbracket_{s'}, \llbracket\delta\rrbracket_{s'}, \llbracket\gamma\rrbracket_{s'}\big)$
		Hence, for the execution $\rho$ it holds $\gamma_i.\mathit{frq} = \gamma_1.\mathit{frq}\setminus \{r_1, r_2,...r_{i-1}\}$ which implies that $s_1 \neq s_i$. Contradiction. 

\item A sequence of different safe actions. 
	%By a similar reasoning, given that the $(0,\infty)$-abstraction is in force only for the $pq$, it can easily be deduced that there are no cycles consisting of any permutation of safe actions (\emph{ctrl}, \emph{fwd}, \emph{brepl} and \emph{notify}). 
	%However, in the following we provide efficient arguments to prove this.\\
	Suppose there exists a cycle with mixed safe actions starting in $s_1$ and ending in $s_i$. 
	Distinguish the following cases.
	
	\begin{longenum}
		\item There exists at least a \emph{fsync} action in the cycle. According to the effects of safe transitions, the \emph{fsync} action will switch to a state with smaller \emph{frq}. It is important here that no action of other type than \emph{fsync} accesses \emph{frq}. This implies that $s_1 \neq s_i$. Contradiction.
		\item No \emph{fsync} action in the cycle.
		This is already established in \cite{mocs}. 										
	\end{longenum}
\end{enumerate}
\end{enumerate}

\end{proof}

\ignore{
CTL\textsuperscript{*}\cite{CTLast} subsumes both CTL and LTL.
Every LTL formula $\varphi$ is identified with the CTL\textsuperscript{*} formula $\forall \varphi$, so we take as reference the semantics of LTL over states.

Each LTL formula $\varphi$ can be embedded into CTL\textsuperscript{*}: $\mathcal{M} \models \varphi \Leftrightarrow \mathcal{M} \models \forall \varphi$ \cite{Baier2008}

%proposed by Emerson and Halpern, called CTL∗
}
\end{document}